\documentclass[fleqn,preprint]{elsarticle}


\bibliographystyle{plainurl} 

\usepackage{
amsmath,amssymb}
\usepackage[T1]{fontenc} 

\usepackage{tikz} 
\usepackage{verbatim}
\usetikzlibrary{calc,positioning,arrows}
\usepackage{paralist}
\usepackage[ruled,vlined] 
       {algorithm2e} 
\SetAlgoInsideSkip{medskip}

\usepackage{graphicx,xcolor}
\usepackage{float}
\usepackage{wrapfig}

\usepackage{hyperref}
\hypersetup{
  colorlinks   = true, 
  urlcolor     = blue, 
  linkcolor    = blue, 
  citecolor   = blue 
}

\newtheorem{theorem}{Theorem}

\newtheorem{observation}{Observation}
\newtheorem{problem}{Open Problem}
\newtheorem{definition}{Definition}

\newenvironment{proof}{\noindent {\bf Proof.\,}}{\qed\smallskip}

{\bf}{\it}
{\bf}{\it}

\newcommand{\C}{\mathcal{C}}

\newcommand{\N}{\mathbb{N}}

\newcommand{\NP}{\ensuremath{\mathsf{NP}}\xspace}

\newcommand{\coNpo}{\mathsf{coNP/poly}}
\newcommand{\containment}{\ensuremath{\mathsf{NP\subseteq coNP/poly}}\xspace}
\newcommand{\mktwo}{\ensuremath{\mathsf{MK[2]}}\xspace}

\newcommand{\vcfc}{\mathrm{vcfc}}
\newcommand{\svcfc}{\mathrm{svcfc}}

\newcommand{\dist}{\mathrm{dist}}

\newcommand{\Oh}{\mathcal{O}}

\def\PNAESAT{\textup{\textsc{positive nae sat}}}
\def\threeSAT{\textup{\textsc{$3$-sat}}}

\def\PNAEthreeSAT{\textup{\textsc{positive nae $3$-sat}}}

\newcommand{\CFVC}{\textup{\textsc{cfvc number}}}

\newcommand{\SCFVC}{\textup{\textsc{strong cfvc number}}}
\newcommand{\SCFVCthree}{\textup{\textsc{strong cfvc $3$-coloring}}}
\newcommand{\SCFVCtwo}{\textup{\textsc{strong cfvc $2$-coloring}}}
\newcommand{\SCFVCk}{\textup{\textsc{strong cfvc $k$-coloring}}}

\pagestyle{plain} 
\usepackage{lineno}

\tikzstyle{vertex}=[draw,circle,inner sep=1.3pt,fill=black]
\tikzstyle{ab}=[draw,circle,inner sep=1.3pt]
\tikzstyle{vertexV}=[draw, circle, minimum size=3mm, inner sep=.34pt] 
\tikzstyle{vertexC}=[draw, circle, inner sep=.35pt] 

\begin{document}

\begin{frontmatter}

\title{The parameterized complexity of Strong Conflict-Free Vertex-Connection Colorability}

\author[1]{Carl Feghali}
\ead{carl.feghali@ens-lyon.fr}

\author[2]{Hoang-Oanh Le}
\ead{HoangOanhLe@outlook.com}

\author[3]{Van Bang Le\corref{corr}}
\ead{van-bang.le@uni-rostock.de}

\cortext[corr]{Corresponding author}

\affiliation[1]{organization={Univ Lyon, EnsL, CNRS, LIP},
addressline={F-69342, Lyon Cedex 07},
country={France}}

\affiliation[2]{organization={Independent Researcher},
addressline={Berlin},
country={Germany}}

\affiliation[3]{organization={Universit\"at Rostock, Rostock},
country={Germany}}

\begin{abstract}
This paper continues the study of a new variant of graph coloring with a connectivity constraint recently introduced by Hsieh \textit{et~al.} [COCOON 2024].
A path in a vertex-colored graph is called \emph{conflict-free} if there is a color that appears exactly once on its vertices. 
A connected graph is said to be \emph{strongly conflict-free vertex-connection $k$-colorable} if it admits a (proper) vertex $k$-coloring such that any two distinct vertices are connected by a  conflict-free \emph{shortest} path. 
Among others, we show that deciding, for a given graph $G$ and an integer $k$, whether~$G$ is strongly conflict-free $k$-colorable is fixed-parameter tractable when  parameterized by the vertex cover number. But under the standard complexity-theoretic assumption $\NP\not\subseteq\coNpo$, deciding, for a given graph $G$, whether~$G$ is strongly conflict-free $3$-colorable does not admit a polynomial kernel, even for bipartite graphs. This kernel lower bound is in stark contrast to the ordinal {\sc $k$-coloring} problem which is known to admit a polynomial kernel when parameterized by the vertex cover number.
\end{abstract}

\begin{keyword}
Graph coloring; Strong conflict-free vertex-connection coloring; Conflict-free vertex-connection number; Parameterized complexity.
\end{keyword}

\end{frontmatter}

\section{Introduction and results}
Let $G$ be a vertex-colored graph. We say that a path in $G$ is \emph{conflict-free} if there is a color that occurs exactly once on the path. 
A graph is said to be \emph{conflict-free vertex-connection $k$-colorable}, \emph{cfvc $k$-colorable} for short, if it admits a (not necessarily proper) vertex $k$-coloring such that any two distinct vertices are connected by a conflict-free path. The \emph{conflict-free vertex-connection number} of a graph~$G$, denoted $\vcfc(G)$, is the smallest integer $k$ such that $G$ is cfvc $k$-colorable. The concept of cfvc $k$-colorability has been introduced in~2020 in~\cite{LiZZMZJ20}. Since then, many research papers have focused on this topic; see, e.g.,~\cite{BrauseJS18,DoanHS22,DoanS21,JiLZ20,LiZ20,LiW18+} and the recent survey~\cite{ChangH24}.  

Recently, the paper~\cite{HsiehLLP24} initiated the study of a strong version of cfvc $k$-colorability by requiring that any two distinct vertices must be connected by a conflict-free \emph{shortest} path.  

\begin{definition}[\cite{HsiehLLP24}, strong cfvc colorings] 
Let~$G$ be a graph and $k\ge 1$ be an integer. $G$ is \emph{strongly conflict-free vertex-connection $k$-colorable}, \emph{strong cfvc $k$-colorable} for short, if there is a function $f:V(G)\to [k]=\{1,2,\ldots,k\}$ such that any two distinct vertices~$u$ and~$v$ of~$G$ are connected by a \emph{conflict-free shortest $u,v$-path}, a shortest $u,v$-path on which there is a color $c\in [k]$ that occurs exactly once.

\noindent
The \emph{strong conflict-free vertex-connection number} of $G$, \emph{strong cfvc number} for short, denoted $\svcfc(G)$, is the smallest integer $k$ such that $G$ is strongly cfvc $k$-colorable.
\end{definition}

This paper deals with the problem of deciding, for a given graph $G$ and an integer $k$, whether $G$ admits a strong cfvc $k$-coloring: 

\medskip\noindent
\fbox{
\begin{minipage}{.96\textwidth}
\SCFVC\\[.7ex]%
\begin{tabular}{l l}
{\em Instance:}& A connected graph $G$ and an integer $k$.\\
{\em Question:}& $\svcfc(G)\le k$, i.e., is $G$ strongly cfvc $k$-colorable\,?
\end{tabular}
\end{minipage}
}

\medskip\noindent
When $k$ is a given constant, i.e., $k$ is not part of the input, we write \SCFVCk\ instead of  \SCFVC. For the sake of completeness, we explicitly give its description below. 

\medskip\noindent
\fbox{
\begin{minipage}{.96\textwidth}
\SCFVCk\\[.7ex]%
\begin{tabular}{l l}
{\em Instance:}& A connected graph $G$.\\
{\em Question:}& $\svcfc(G)\le k$, i.e., is $G$ strongly cfvc $k$-colorable\,?
\end{tabular}
\end{minipage}
}

\medskip\noindent
It should be noted that \SCFVC\ is generally more difficult than \SCFVCk; any efficient algorithm for \SCFVC\ must have a runtime that is polynomial in the input size, in particular polynomial in $k$, while any efficient algorithm for \SCFVCk\ has a runtime that only needs to be polynomial in the size of input $G$ (but can be exponential in $k$).

\medskip\noindent
Note that, unlike cfvc colorings, strong cfvc colorings are proper colorings.  
Further, unlike proper colorings, strong cfvc colorings are \emph{not} monotone: deleting vertices may turn the instance from yes to no; see an example depicted in Fig.~\ref{fig:non-monotone}. 

\begin{figure}[H]
\centering
\begin{tikzpicture}[scale=.35]
\node[vertex] (v1) at (1,3) [label=left:{\small $1$}]{};
\node[vertex] (v2) at (3,5) [label=left:{\small $2$}]{};
\node[vertex] (v2') at (3,1) [label=left:{\small $3$}]{};
\node[vertex] (v3) at (5,3) [label=left:{\small $1$}]{};
\node[vertex] (v4) at (7,5) [label=right:{\small $u$}]{};
\node at (7.2,5) [label=left:{\small $2$}]{};
\node[vertex] (v4') at (7,1) [label=right:{\small $v$}]{};
\node at (7.2,1) [label=left:{\small $3$}]{};
\node[vertex] (v5) at (9,3) [label=above:{\small $1$}]{};
\node[vertex] (v6) at (11,3) [label=above:{\small $3$}]{};
\node[vertex] (v7) at (13,3) [label=above:{\small $2$}]{};
\node[vertex] (v8) at (15,3) [label=above:{\small $1$}]{};

\draw (v1)--(v2)--(v3)--(v4')--(v5)--(v6)--(v7)--(v8); 
\draw (v1)--(v2')--(v3)--(v4)--(v5);

\end{tikzpicture} 
\qquad\quad
\begin{tikzpicture}[scale=.35]
\node[vertex] (v1) at (1,3) {};
\node[vertex] (v2) at (3,5) {};
\node[vertex] (v2') at (3,1) {};
\node[vertex] (v3) at (5,3) {};
\node[vertex] (v4') at (7,1) [label=right:{\small $v$}]{};
\node[vertex] (v5) at (9,3) {};
\node[vertex] (v6) at (11,3) {};
\node[vertex] (v7) at (13,3) {};
\node[vertex] (v8) at (15,3) {};

\draw (v1)--(v2)--(v3)--(v4')--(v5)--(v6)--(v7)--(v8); 
\draw (v1)--(v2')--(v3); 
\end{tikzpicture} 
\caption{Left: a graph $G$ with a strong conflict-free $3$-coloring and two twins $u$ and $v$. 
Right: it can be verified that $G-u$ is \emph{not} strongly conflict-free $3$-colorable. 
}\label{fig:non-monotone}
\end{figure}
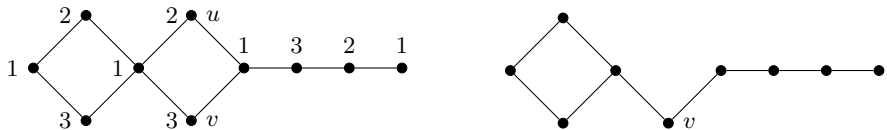

This fact indicates, as discussed in~\cite{HsiehLLP24}, that determining the complexity of \SCFVC\ and \SCFVCk\ may be harder than that of the classical coloring problems {\sc chromatic number} and $k$-{\sc coloring}, respectively. 

\paragraph{Previous work} 
Since paths in trees are unique, the concepts of cfvc colorings and strong cfvc colorings coincide when restricted to trees. Thus, strong cfvc colorings in trees have been investigated implicitly in~\cite{LiZZMZJ20,LiZ20}. In~\cite{LiZZMZJ20}, Li et al. realized that determining the strong cfvc number of trees is \lq\lq very difficult.\rq\rq\ 
For $n$-vertex paths $P_n$, they proved that $\svcfc(P_n)=\lceil\log_2(n+1)\rceil$. Several upper bounds for the (strong) cfvc number of trees are given in~\cite{LiZZMZJ20}. It follows from~\cite{LiW18+} that $\svcfc(T)$ is upper bounded by $\lceil\log_2(n+1)\rceil$ for $n$-vertex trees~$T$. Exact value of $\svcfc(T)$ for trees $T$ of diameter at most~$4$ has been determined in~\cite{LiZ20}. Whether or not the cfvc number of a given tree can be computed in polynomial time is still an open problem; see~\cite{Li25,LiW18+,Wu25}. 

The concept of strong cfvc coloring is formally introduced in~\cite{HsiehLLP24}. 
The authors of~\cite{HsiehLLP24} observed that strongly cfvc $2$-colorable graphs are exactly the complete bipartite graphs, hence, \SCFVCtwo\ can be solved in linear time. They proved that \SCFVCk\ is \NP-complete for diameter-$d$ graphs for all pairs $(k,d)$ with $k\ge 3$ and $d\ge 2$ except when $(k,d)=(3,2)$. Moreover, \SCFVCthree\ remains \NP-complete even when restricted to $3$-colorable graphs with diameter~$3$, radius~$2$ and domination number~$3$, and cannot be solved in subexponential time in the vertex number on these restricted graphs unless the Exponential Time Hypothesis fails. When restricted to split graphs and complement of bipartite graphs, \SCFVC\ can be solved in linear time. 

Observe that a graph has diameter at most~$2$ if and only if every two non-adjacent vertices have a common neighbor. Hence, in graphs of diameter at most~$2$, the two concepts, proper colorings and strong cfvc colorings,  coincide. Thus, strong cfvc colorings have been studied implicitly in the literature: we remark that, while it is well known and easy to see that in diameter-$2$ graphs, $k$-{\sc coloring}, hence \SCFVCk, is \NP-complete for any fixed $k\ge4$, determining the complexity of $3$-{\sc coloring} in diameter-$2$ graphs, hence the complexity of \SCFVCthree, in diameter-$2$ graphs, is a notoriously difficult, well-known long-standing open problem in algorithmic graph theory (cf.~\cite{MertziosS16,DebskiPR22}). 

Concerning parameterized complexity, Cai~\cite[Theorem 4.4]{Cai03a} proved that \textsc{chromatic number} parameterized by the distance to a split graph is $\mathsf{W}[1]$-hard by giving a parameterized reduction from \textsc{independent set}. Since the constructed graph instance for \textsc{chromatic number} has diameter~$2$, Cai~\cite{Cai03a} therefore proved  that \SCFVC\ parameterized by the distance to a split graph is $\mathsf{W}[1]$-hard. 

\paragraph{Our results} 
In this paper, we focus on the parameterized complexity of \SCFVC\ and \SCFVCk. Since \SCFVC\ is para-hard for the \lq\lq natural\rq\rq\ parameter $k$ (the problem is \NP-complete for any fixed $k\ge3$), we parameterize \SCFVC\ by the most popular structural parameter, the vertex cover number. The contributions of this paper are as follows. 

{\parindent5mm
\begin{itemize}
\item \SCFVC\ parameterized by the vertex cover number is fixed-parameter tractable.
\item Unless $\containment$, \SCFVCthree\ parameterized by the vertex cover number does \emph{not} admit a polynomial kernel, even when restricted to bipartite graphs of bounded diameter.
\end{itemize}
}

\noindent
Recall that strong cfvc colorings are proper colorings. Relating to the classical \textsc{$k$-coloring} problem, we remark that the kernel lower bound for \SCFVCthree\ parameterized by the vertex cover is in stark contrast to \textsc{$k$-coloring}. It is known (\cite{JansenK13,JansenP19}) that \textsc{$k$-coloring} parameterized by the vertex cover does admit a polynomial kernel. 
Also, we will point out that \SCFVCthree\ parameterized by the distance to a path does \emph{not} admit a polynomial kernel. 

\noindent
\paragraph{Organization} We provide preliminaries in section~\ref{sec:pre}. In section~\ref{sec:fpt} we prove that \SCFVC\ is fpt when parameterized by the vertex cover number. In section~\ref{sec:nokernel} we show that \SCFVCthree\ parameterized by the vertex cover number and by the distance to a path admits no polynomial kernel unless $\containment$. Section~\ref{sec:con} concludes the paper with some open questions for further research. 

\section{Preliminaries}\label{sec:pre}
We consider only finite, simple and \emph{connected} undirected graphs $G=(V,E)$ with vertex set $V(G)=V$ and edge set $E(G)=E$. 
The \emph{distance} between two vertices $u$ and $v$, written $\dist(u,v)$, is the length of a shortest $u,v$-path. 
The \emph{diameter} of~$G$ is the maximum distance between any two vertices in~$G$. 
We use $N(v)$ to denote the neighborhood of a vertex $v$ and, $N[v]=N(v)\cup\{v\}$ for its closed neighborhood.
Two vertices $u$ and $v$ of $G$ are \emph{true twins} if $N[u]=N[v]$ and \emph{false twins} if $N(u)=N(v)$. 
Say that $u$ and $v$ are \emph{twins} if they are true twins or false twins.

\paragraph{Parameterized complexity and kernelization}
A \emph{parameterized problem} is a language $L\subseteq\Sigma^*\times\N$; the second component $k$ of an instance $(x,k)$ is called its \emph{parameter}. A parameterized problem $L$ is \emph{fixed-parameter tractable}, \emph{fpt} for short, if it can be decided whether $(x,k)\in L$ in time $f(k)|x|^c$ for a computable function $f\colon\N\to\N$ and a constant $c$. A \emph{kernelization}, or just \emph{kernel}, for a parameterized problem $L$ is an efficient algorithm that given $(x,k)\in\Sigma^*\times\N$ takes time polynomial in $|x|+k$ and returns an instance $(x',k')$ of size $|x'|+k'\le h(k)$ for some computable function $h\colon\N\to\N$ and with $(x,k)\in L$ if and only if $(x',k')\in L$. The function $h$ is also called the \emph{size} of the kernel and for a \emph{polynomial kernel} it is required that $h$ is polynomially bounded. It is well known that a parameterized problem is fpt if and only if it has a kernel. 
A \emph{polynomial-parameter transformation} from $Q\subseteq\Sigma^*\times\N$ to $Q'\subseteq\Gamma^*\times\N$ is an efficient algorithm which takes an instance  $(x,k)$ of $Q$ as input and outputs, in time polynomial in $|x|+k$, an equivalent instance $(x',k')$ of $Q'$ with $k'$ is polynomially bounded in~$k$: $(x,k)\in Q$ if and only if $(x',k')\in Q'$ and $k'=k^{\Oh(1)}$.

\paragraph{Graph parameters} Let $G=(V,E)$ be a graph. A \emph{vertex cover} of $G$ is a set $X\subseteq V$ such that $G-X$ is an independent set. 
For a graph class $\C$, a \emph{modulator of $G$ to $\C$} is a set $X\subseteq V$ such that $G-X\in\C$. The \emph{distance} to $\C$ is the smallest size of a modulator of $G$ to $\C$. E.g., the vertex covers are exactly the modulators and the vertex cover number is the distance to the class of edgeless graphs. 

\section{\SCFVC\ parameterized by the vertex cover number}\label{sec:fpt}
In this section, we give an fpt-algorithm solving \SCFVC\ parameterized by the vertex cover number.

\medskip\noindent
\fbox{
\begin{minipage}{.96\textwidth}
\SCFVC$[\textsc{vc}]$\\[.7ex]
\begin{tabular}{l l}
{\em Instance:}& A connected graph $G$, an integer $k\in\N$ and a vertex\\ 
                      & cover $X\subseteq V(G)$ of $G$.\\
{\em Parameter:}& $|X|$.\\
{\em Question:}& $\svcfc(G)\le k$, i.e., is $G$ strongly cfvc $k$-colorable\,?
\end{tabular}
\end{minipage}
}

\medskip\noindent
Note that in \SCFVC$[\textsc{vc}]$ we may assume that a vertex cover is given together with the input graph because it is well known that a $2$-approximate vertex cover can be found in polynomial time. 

We need an observation about twins. By the example depicted in Fig.~\ref{fig:non-monotone}, \CFVC\ and \SCFVC\ are not monotone, even when deleting one of some false twins. However, some useful facts about deleting twins do hold, that will simplify some arguments in our proof below.

\begin{observation}\label{obse:twins}
Let $u$ and $v$ be two false twins in a connected graph $G$.
\begin{itemize}
\item[\em (i)] If $G-u$ is strongly cfvc $k$-colorable, then $G$ is strongly cfvc  $k$-colorable. 
\item[\em (ii)] If $f$ is a strong cfvc $k$-coloring of $G$ with $f(u)=f(v)$, then the restriction of $f$ on $G-u$ is a strong cfvc $k$-coloring of $G-u$. 
\end{itemize}
\end{observation}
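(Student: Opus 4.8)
The plan is to first pin down the distance structure around a pair of false twins $u,v$ and then to obtain one coloring from the other by exchanging the roles of $u$ and $v$. For the distance structure: since $N(u)=N(v)$ we have $uv\notin E(G)$, and $N(u)=N(v)$ is nonempty because $G$ is connected and $u\ne v$, so picking a common neighbour $z$ gives $\dist_G(u,v)=2$. Moreover $\dist_G(u,w)=\dist_G(v,w)$ for every $w\notin\{u,v\}$: a shortest $u,w$-path starts with an edge $uw_1$ where $w_1\in N(u)=N(v)$, so replacing its first vertex by $v$ (or, if $v$ already lies on the path, passing to the suffix starting at $v$) gives a $v,w$-path that is no longer, and symmetry yields equality. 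Most importantly, $\dist_{G-u}(x,y)=\dist_G(x,y)$ for all $x,y\ne u$: the inequality $\ge$ is trivial, and for $\le$ one takes a shortest $x,y$-path $P$ of $G$; if $u=p_i$ is an internal vertex of $P$, then $v$ cannot lie on $P$, since otherwise $v$ would be adjacent to both $p_{i-1}$ and $p_{i+1}$, and — subpaths of shortest paths being shortest — $v$ would then have to equal $p_{i-2}$ or $p_i$ and also $p_i$ or $p_{i+2}$, forcing $v=p_i=u$, a contradiction; hence replacing $u$ by $v$ on $P$ gives a shortest $x,y$-path avoiding $u$. This \emph{rerouting observation} — a shortest path through $u$ never contains $v$ and can be rerouted through $v$ at the same length — is the technical heart; it also shows $G-u$ is connected.

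\emph{Part (ii).} Let $f$ be a strong cfvc $k$-coloring of $G$ with $f(u)=f(v)$ and let $g$ be its restriction to $V(G-u)$, which is clearly proper. For distinct $x,y\in V(G-u)$ pick a conflict-free shortest $x,y$-path $P$ in $G$ for $f$. If $u\notin V(P)$, then by the distance lemma $P$ is also shortest in $G-u$, and since $g$ agrees with $f$ on $V(P)$ it is conflict-free for $g$. If $u\in V(P)$, apply the rerouting observation to obtain $P'$ from $P$ by replacing $u$ with $v$; then $P'$ avoids $u$, has the same length (hence is shortest in $G-u$), and carries the same multiset of $f$-colors as $P$ because $f(v)=f(u)$, so $P'$ is conflict-free for $f$ and therefore for $g$.

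\emph{Part (i).} Given a strong cfvc $k$-coloring $g$ of $G-u$, define $f$ on $V(G)$ by $f(u):=g(v)$ and $f:=g$ on all other vertices; since $N_G(u)=N_G(v)$ and $g$ is proper, no neighbour of $u$ carries the color $g(v)$, so $f$ is proper. For the connection condition, take any two distinct vertices. If both differ from $u$, a conflict-free shortest path between them in $G-u$ for $g$ is, by the distance lemma, also shortest in $G$ and is conflict-free for $f$. For the pair $u,v$, the path $u,z,v$ through a common neighbour $z$ has length $2$, hence is shortest, and the color $g(z)$ occurs on it exactly once because $g(z)\ne g(v)$, so it is conflict-free. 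For a pair $u,x$ with $x\notin\{u,v\}$, take a conflict-free shortest $v,x$-path $Q$ in $G-u$ for $g$; since $\dist_G(u,x)=\dist_G(v,x)$ equals the length of $Q$ and the vertex of $Q$ right after $v$ lies in $N(v)=N(u)$, replacing the initial vertex $v$ of $Q$ by $u$ yields a shortest $u,x$-path in $G$ on which, as $f(u)=g(v)$, the $f$-colors form exactly the same multiset as the $g$-colors of $Q$, so it is conflict-free.

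\emph{Main obstacle.} Everything except the distance lemma is routine bookkeeping — verifying properness after setting $f(u)=g(v)$, and noting that exchanging $u\leftrightarrow v$ along a path preserves which color is unique, which holds precisely because $f(u)=f(v)$. The one point that needs care is the rerouting observation, specifically the claim that a shortest path through $u$ cannot also contain $v$; this is exactly where one invokes "a subpath of a shortest path is a shortest path", applied to the two segments of $P$ joining $v$ to the path-neighbours of $u$.
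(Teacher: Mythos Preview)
Your proof is correct and follows the same approach as the paper: extend or restrict the coloring via $f(u):=f(v)$, then reroute shortest paths by swapping $u\leftrightarrow v$, which preserves the color multiset. Your front-loaded distance lemma and rerouting observation make the argument cleaner than the paper's case split---in particular, you correctly prove that a shortest $x,y$-path with $x,y\ne u$ passing through $u$ cannot also contain $v$, a case the paper handles separately (and somewhat confusingly, since it is in fact vacuous).
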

\begin{proof} 
(i):  Let $f$ be a strong cfvc $k$-coloring of $G-u$. Extend $f$ to a proper $k$-coloring of $G$ by setting $f(u):= f(v)$. Then $f$ is indeed a strong cfvc $k$-coloring of $G$: since $u$ and $v$ are twins in $G$, for any vertex $x\not=u$, if $P$ is a shortest $x,v$-path in $G-u$ then $P-v+u$ is a shortest $x,u$-path in $G$. Since there exists some conflict-free shortest $x,v$-path in $G-u$, some conflict-free shortest $x,u$-path exists in $G$. 
By the same reason, for any distinct vertices $x,y\in V(G)\setminus\{u,v\}$, any conflict-free shortest $x,y$-path in $G-u$ is a conflict-free shortest $x,y$-path in $G$. (Note that adding $u$ cannot decrease the distance for any pair $x,y$.) 

(ii): 
We will see that, under the restriction of $f$ on $G-u$, any two vertices in $G-u$ are connected by a conflict-free shortest path in $G-u$. 
Let $x\not= y$ be two vertices in $G-u$, and let $P$ be a conflict-free shortest $x,y$-path in $G$. 
If~$u$ is not contained in $P$, then $P$ is a conflict-free shortest $x,y$-path in $G-u$. If~$P$ contains~$u$ but not~$v$, then $P'=P-u+v$ is a shortest $x,y$-path in $G-u$, and~$P'$ is conflict-free because $f(u)=f(v)$. Finally, if~$P$ contains both~$u$ and~$v$, then $P=uwv$ for some $w\in N(u)=N(v)$. That is, $\{x,y\}=\{v,w\}$ and $P=xy$ is a conflict-free shortest $x,y$-path in $G-u$: $f$ is particularly a proper coloring, hence $f(x)\not=f(y)$.  
\end{proof}

\begin{theorem}\label{thm:vc-fpt}
\SCFVC$[\textup{\textsc{vc}}]$ can be solved in time $t(|X|)\cdot\Oh(n^5)$, where $t(|X|)=|X|^{|X|+(|X|+1)\cdot 2^{|X|}}$ and $n$ is the number of vertices of the input graph.
\end{theorem}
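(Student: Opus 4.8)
The plan is to cut the number of relevant colors down to a function of $|X|$ and then brute‑force over a bounded‑size description of a candidate coloring, verifying each candidate in polynomial time. First I would observe that $\svcfc(G)\le |X|+1$: color the vertices of $X$ with $|X|$ pairwise distinct colors and all of $I:=V(G)\setminus X$ with one further color. This is proper (every edge meets $X$), and it is a strong cfvc coloring — a shortest path of length $1$ is trivially conflict‑free, while a shortest path $P$ of length at least $2$ always contains a vertex of $X$ (if the first internal vertex of $P$ lies in $I$ then both of its $P$-neighbors lie in $X$), and every vertex of $P\cap X$ then carries a color occurring exactly once along $P$. Hence the algorithm may answer ``yes'' as soon as $k\ge |X|+1$, and from now on $k\le |X|$; since colors are interchangeable we may also assume that all colors used lie in $\{1,\dots,|X|\}$.

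Next, group the vertices of $I$ by \emph{type} — the type of $a\in I$ is its neighborhood $N(a)\subseteq X$ — so there are at most $2^{|X|}$ types, and same‑type vertices are false twins, hence interchangeable. The algorithm enumerates (i) the restriction $f|_X$, a coloring of $G[X]$ with colors from $[k]$, which gives at most $|X|^{|X|}$ candidates, and (ii) for every type $S$ a list of $|X|+1$ colors from $[k]\setminus f(S)$ recording which colors the type-$S$ vertices receive, which gives at most $|X|^{(|X|+1)2^{|X|}}$ candidates overall. Each guessed description is turned into a concrete coloring $f$ of $G$ in a canonical way (fix an ordering $a_1,a_2,\dots$ of the type-$S$ vertices and let $a_i$ take the $i$-th listed color for $i\le |X|+1$ and the last listed color for $i>|X|+1$), and a candidate is discarded unless $f$ is proper, i.e. $f|_X$ is proper on $G[X]$ and $[k]\setminus f(S)\neq\varnothing$ for every nonempty type $S$. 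The justification that this enumeration is exhaustive is a structural lemma: \emph{if $\svcfc(G)\le k$ then $G$ admits a strong cfvc $k$-coloring of exactly this canonical shape}. Its proof would lean on the facts that $\diam(G)\le 2|X|$ (the $I$-vertices on any shortest path are pairwise non‑adjacent), that a shortest path contains at most two vertices of any single type, and that only $k\le |X|$ colors exist, which together let one reroute the conflict‑free shortest paths of an arbitrary valid coloring through a bounded palette per type and push the surplus type‑vertices onto a single color.

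For each candidate surviving the properness check, verify in polynomial time that $f$ is a strong cfvc coloring. Here I would note that under any proper coloring every pair of vertices at distance at most $2$ is already joined by a conflict‑free shortest path (inspect the at most three vertices of such a path), so only pairs at distance between $3$ and $2|X|$ need attention; for such a pair $(u,v)$ one decides, color by color, using reachability in the layered BFS graph between $u$ and $v$, whether some shortest $u,v$-path contains exactly one vertex of that color, which over all pairs and colors costs $\Oh(n^5)$. The algorithm accepts iff some candidate passes, and multiplying the $|X|^{|X|+(|X|+1)2^{|X|}}$ candidates by the $\Oh(n^5)$ verification gives the claimed running time.

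The main obstacle is the structural lemma. One has to show that describing, per type, only a bounded list of colors (used in the fixed canonical way) loses no generality: a conflict‑free shortest path for a pair $(u,v)$ must be reroutable through, and recolorable with, only vertices and colors that the description still records, and committing these choices for one pair must not invalidate the conflict‑free path needed by another pair. The delicate cases are pairs all of whose shortest paths pass through two vertices of one type, and colors that act as the unique witness on some path; handling these simultaneously across all pairs is where the real work lies, and it is what forces the palette per type to be of size about the diameter rather than an absolute constant.
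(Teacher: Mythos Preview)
Your overall plan matches the paper's: bound $k\le|X|$, group $I=V(G)\setminus X$ by neighborhood type in $X$, enumerate $|X|^{|X|+(|X|+1)2^{|X|}}$ candidate colorings, and verify each in $\Oh(n^5)$. The gap is the justification that this enumeration is exhaustive. You isolate the right statement --- every yes-instance admits a canonical coloring with a palette of size $|X|+1$ per type and all surplus type-vertices sharing one colour --- but you do not prove it. Your proposed attack (bound the diameter by $2|X|$, note at most two same-type vertices on any shortest path, then ``reroute'' conflict-free paths through the restricted palette and ``push the surplus type-vertices onto a single color'') is only sketched, and the difficulty you yourself flag is genuine: recoloring one surplus vertex can kill the unique-color witness on a shortest path between two \emph{other} vertices, and nothing in the sketch repairs all pairs simultaneously.

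The paper avoids rerouting altogether. It phrases the same bound as a kernelization rule: whenever a twin class $I_S=\{v\in I: N(v)=S\}$ has more than $k+1$ vertices, delete one vertex $u$. Safeness needs only Observation~\ref{obse:twins} on false twins: (i)~any strong cfvc $k$-coloring of $G-u$ extends to $G$ by giving $u$ a twin's color, and (ii)~if two twins share a color then deleting one of them preserves validity. For the nontrivial direction, either $f(u)$ already repeats in $I_S\setminus\{u\}$ and (ii) applies directly, or pigeonhole on $|I_S\setminus\{u\}|\ge k+1$ yields $v,w$ there with $f(v)=f(w)$; apply (ii) to delete $v$, then (i) to reinsert $v$ with color $f(u)$, then (ii) to delete $u$. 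Exhaustive application leaves at most $|X|+(k+1)2^{|X|}$ vertices, and brute force over all $k$-colorings of this kernel gives exactly your candidate count. This twin argument is the missing ingredient; with it your structural lemma is immediate (take any kernel coloring and extend it back to $G$ via repeated use of~(i)), and the diameter and rerouting considerations become unnecessary.
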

\begin{proof} 
Let $(G,k,X)$ be an instance of \SCFVC$[\textsc{vc}]$. Note that $I=V(G)\setminus X$ is an independent set and every vertex $v\in I$ has a neighbor in $X$. We may assume that $k\le |X|$, otherwise $G$ is clearly strongly cfvc $k$-colorable: color each vertex in $X$ with a unique color from $\{1,2,\ldots,|X|\}$ and color all vertices in~$I$ with the same color $|X|+1$. 

We first apply the following reduction rule exhaustively to obtain an equivalent instance $(G',k',X')$ of at most $|X|+(|X|+1)\cdot 2^{|X|}$ vertices. 
For each subset $S\subseteq X$,  let $I_S$ be set of all (false) twins in~$I$ that are adjacent to exactly the vertices in $S$, $I_S=\{v\in I\mid N(v)=S\}$:

\smallskip\noindent
\colorbox{gray!20!white}{
\begin{minipage}{.96\textwidth}
\textbf{Reduction Rule.}\, 
If, for some $S\subseteq X$, $|I_S|>k+1$ then remove a vertex $u$ from $I_S$, that is, output $G'=G-u$, $k'=k$ and $X'=X$. 
\end{minipage}
}

\smallskip\noindent
For the safeness of the reduction rule, we will see that $G$ is strongly cfvc $k$-colorable if and only if $G-u$ is strongly cfvc $k$-colorable. Let $S\subseteq X$ with $|I_S|>k+1$, and let $J=I_S\setminus\{u\}$; notice that $I_S$ is a set of false twins and that $|J|\ge k+1$. 
Now, if $G-u$ is strongly cfvc $k$-colorable, then by Observation~\ref{obse:twins}~(i), $G$ is strongly cfvc $k$-colorable.  
Conversely, assume that $G$ admits a strong cfvc $k$-coloring $f$. If $f(u)\in f(J)$ then, by Observation~\ref{obse:twins}~(ii), the restriction of $f$ on $G-u$ is a strong cfvc $k$-coloring of $G-u$. Otherwise, $f(u)\not\in f(J)$. As $|J|\ge k+1$, there are two distinct vertices $v,w\in J$ with $f(v)=f(w)$. Define a (proper) $k$-coloring~$f'$ of~$G$ as follows: $f'(z)=f(z)$ for all $z\not=v$ and set $f'(v)=f(u)$. Then, as~$u$ and~$v$ are twins in $G$, $f'$ is a strong cfvc $k$-coloring of~$G$, with $f'(u)=f'(v)$. By Observation~\ref{obse:twins}~(ii) again, the restriction of~$f'$ on $G-u$ is a strong cfvc $k$-coloring of~$G-u$.  

We have seen that the reduction rule is safe. Observe that the reduced equivalent instance can be obtained in fpt-time $2^{|X|}\cdot\Oh(n^3)$: the reduced rule can be applied at most $|I|-(k+1)<n$ times in total. Checking, if there is some $S\subseteq X$ for which the reduction rule is applicable, can be done in time $2^{|X|}\cdot\Oh(n^2)$.

If the reduction rule is not longer applicable, then for any $S\subseteq X$, there exist at most $k+1$ vertices $u\in I$ with $N(u)=S$. Hence, since there are $2^{|X|}$ subsets $S\subseteq X$, the reduced equivalent instance $(G',k,X)$ has at most $|X|+(k+1)\cdot 2^{|X|}\le |X|+(|X|+1)\cdot 2^{|X|}$ vertices. Hence, deciding whether $(G,k,X)$ is an yes-instance can be done by checking if some proper $k$-coloring of $(G',k,X)$ is indeed a strong cfvc $k$-coloring. Since it can be verified in time $\Oh(kn^4)=\Oh(n^5)$ if a $k$-coloring is a strong cfvc $k$-coloring (cf.~\cite{HsiehLLP24}), one can decide whether $(G,k,X)$ is an yes-instance in time $2^{|X|}\cdot\Oh(n^3)+k^{|X|+(k+1)\cdot 2^{|X|}}\cdot\Oh(n^5)$ which is roughly bounded by $|X|^{|X|+(|X|+1)\cdot 2^{|X|}}\Oh(n^5)$. 
\end{proof}

\section{A reduction from \PNAESAT\ to \SCFVCthree\ and its consequences}\label{sec:nokernel}
In this section, we first give a polynomial reduction from the \NP-complete problem \textsc{positive not-all-equal sat} (\cite{Schaefer78}), \PNAESAT\ for short, or equivalently, \textsc{set splitting} or \textsc{hypergraph $2$-coloring}, to \SCFVCthree, restricted to bipartite graphs of bounded diameter. Then, we will derive from this reduction a lower bound based on the Exponential Time Hypothesis for exact algorithms solving \SCFVCthree, as well as a kernel lower bound based on the standard assumption $\NP\not\subseteq\coNpo$ for \SCFVCthree\ parameterized by the vertex cover number.

Recall that an instance for \PNAESAT\ is a Boolean formula in conjunctive normal form (\emph{CNF formula} for short) $\C$, a collection of clauses over a variable set~$U$, in which no negative literal appears. The problem asks whether there is a truth assignment of the variables such that every clause in $\C$ has at least one true and at least one false variable. Such an assignment is called an \emph{not-all-equal-assignment}, \emph{NAE-assignment} for short. 
Let $(U,\mathcal{C})$ be an instance of \PNAESAT\ over variable set $U$ 
and clause set $\mathcal{C}$. 
If some clause $C \in \mathcal{C}$ has size at most one then the instance is trivially no. Therefore, we may assume that every clause contains at least two variables. 
We construct an instance $G$ for \SCFVCthree\ as follows. 

\begin{itemize}
\item We begin by taking a chain of six $4$-cycles, consisting of 19 vertices $a_1, a_2,\allowbreak\ldots,\allowbreak a_6$, $b_1, b_2,\allowbreak\ldots,\allowbreak b_6$ and $h_1,h_2,\allowbreak\ldots,\allowbreak h_7$,  along with $24$ edges $h_ia_i$, $a_ih_{i+1}$, $h_{i+1}b_i$ and $b_ih_i$, $1\le i\le 6$.  
\item We take two additional vertices $a_7$ and $b_7$, and edges $h_7a_7$ and $h_7b_7$.
\item For each clause $C\in \mathcal{C}$ we have a \emph{clause vertex}~$c$.
\item For each variable $v\in U$ we have a \emph{variable vertex}~$v$.
\item We add edges between clause vertices $c$ and variable vertices $v$ if $v\in C$.
\item We add edges between~$a_7$ and all clause vertices, and between~$b_7$ and all clause vertices.
\item Finally, we add edges between~$h_1$ and all variable vertices to obtain the graph~$G$. 
\end{itemize}
Observe that $G$ has $|U|+|\C|+21$ vertices. See Fig.~\ref{fig:G} for an example.

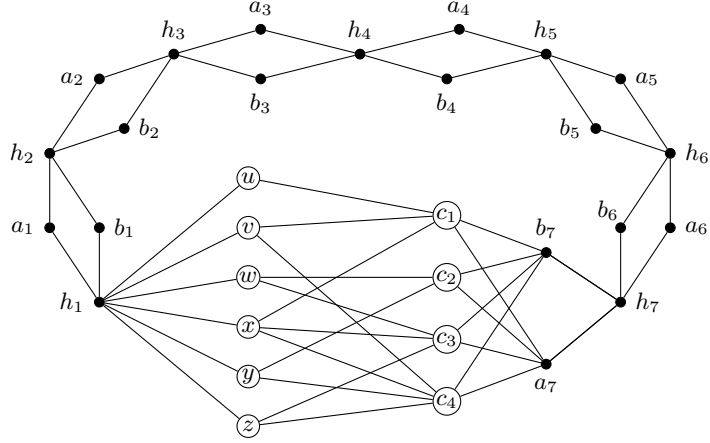
\begin{figure}[ht]
\centering
\begin{tikzpicture}[scale=.33]
\node[vertexV] (u) at (9,11) {\small $u$};
\node[vertexV] (v) at (9,9) {\small $v$};
\node[vertexV] (w) at (9,7) {\small $w$};
\node[vertexV] (x) at (9,5) {\small $x$};
\node[vertexV] (y) at (9,3) {\small $y$};
\node[vertexV] (z) at (9,1) {\small $z$};
\node[vertexC] (c1) at (17,9.5) {\small $c_1$};
\node[vertexC] (c2) at (17,7) {\small $c_2$};
\node[vertexC] (c3) at (17,4.5) {\small $c_3$};
\node[vertexC] (c4) at (17,2) {\small $c_4$};

\draw (c1)--(u); \draw (c1)--(v); \draw (c1)--(x);
\draw (c2)--(w); \draw (c2)--(y);
\draw (c3)--(w); \draw (c3)--(x); \draw (c3)--(z);
\draw (c4)--(v); \draw (c4)--(x); \draw (c4)--(y); \draw (c4)--(z);

\node[vertex] (h1) at (3,6) [label=left:{\small $h_1$}]{};
\node[vertex] (h2) at (1,12) [label=left:{\small $h_2$}]{};
\node[vertex] (h3) at (6,16) [label=above:{\small $h_3$}]{};
\node[vertex] (h4) at (13.5,16)  [label=above:{\small $h_4$}]{};
\node[vertex] (h5) at (21,16)  [label=above:{\small $h_5$}]{};
\node[vertex] (h6) at (26,12)  [label=right:{\small $h_6$}]{};
\node[vertex] (h7) at (24,6)  [label=right:{\small $h_7$}]{};
\node[vertex] (a1) at (1,9) [label=left:{\small $a_1$}]{};
\node[vertex] (a2) at (3,15) [label=left:{\small $a_2$}]{};
\node[vertex] (a3) at (9.5,17) [label=above:{\small $a_3$}]{};
\node[vertex] (a4) at (17.5,17) [label=above:{\small $a_4$}]{};
\node[vertex] (a5) at (24,15) [label=right:{\small $a_5$}]{};
\node[vertex] (a6) at (26,9) [label=right:{\small $a_6$}]{};
\node[vertex] (a7) at (21,3.5) [label=below:{\small $a_{7}$}]{};

\node[vertex] (b1) at (3,9) [label=right:{\small $b_1$}]{};
\node[vertex] (b2) at (4,13) [label=right:{\small $b_2$}]{};
\node[vertex] (b3) at (9.5,15) [label=below:{\small $b_3$}]{};
\node[vertex] (b4) at (17,15) [label=below:{\small $b_4$}]{};
\node[vertex] (b5) at (23,13) [label=left:{\small $b_5$}]{};
\node[vertex] (b6) at (24,9) {}; 
\node at (23.5,9.7) {\small $b_6$};
\node[vertex] (b7) at (21,8) [label=above:{\small $b_7$}]{};

\draw (h1)--(u); \draw (h1)--(v); \draw (h1)--(w); \draw (h1)--(x); \draw (h1)--(y); \draw (h1)--(z);
\draw (a7)--(c1); \draw (a7)--(c2); \draw (a7)--(c3); \draw (a7)--(c4);
\draw (b7)--(c1); \draw (b7)--(c2); \draw (b7)--(c3); \draw (b7)--(c4);
\draw (h7)--(a7); \draw (h7)--(b7);

\draw (h1)--(a1)--(h2)--(a2)--(h3)--(a3)--(h4)--(a4)--(h5)--(a5)--(h6)--(a6)--(h7)--(a7);
\draw (h1)--(b1)--(h2)--(b2)--(h3)--(b3)--(h4)--(b4)--(h5)--(b5)--(h6)--(b6)--(h7)--(b7); 

\end{tikzpicture} 

\caption{The bipartite graph $G$ from the formula $(U,\C)$ over variable set $U=\{u,v,w,x,y,z\}$ and clause set $\C=\{C_1,C_2,C_3,C_4\}$ with $C_1=\{u, v, x\}$, $C_2=\{w,y\}$, $C_3=\{w,x,z\}$ and $C_4= \{v,x,y,z\}$.}\label{fig:G}
\end{figure}

Clearly the construction of $G$ can be carried out in polynomial time, and $G$ is bipartite and has diameter~$8$ and radius~$8$. 
It remains to verify correctness, i.e., that $(U,\mathcal{C})$ is yes for \PNAESAT\ if and only if $G$ is yes for \SCFVCthree. 

For one direction, assume that there is a satisfying NAE-assignment \texttt{b} for $\mathcal{C}$. Then color the vertices of $G$ as follows:
\begin{itemize}
\item color variable vertex $v$ by color~$1$ if $\texttt{b}(v)=\texttt{True}$. Otherwise by color~$0$;
\item color all clause vertices $c$ by color $2$;
\item color all $h_i$ by color $2$;
\item color all $a_i$ by color 0 and all $b_i$ by color $1$.
\end{itemize}

It is not hard to see that this proper $3$-coloring of $G$ is a strong cfvc $3$-coloring: for any two vertices~$x$ and~$y$ of distance at least~$3$, there is a shortest $x,y$-path that contains only one vertex colored by~$0$ or a shortest $x,y$-path that contains only one vertex colored by~$1$. 

For the other direction, assume that $G$ admits a strong cfvc $3$-coloring $f$ with colors $0,1$ and~$2$. We will see that $(U,\C)$ is yes for \PNAESAT. Without loss of generality, let 
\begin{align*}
f(h_1)=2.
\end{align*}
Then the variable vertices are colored by $0$ or $1$. We assign variable $v$ to \texttt{True} if $f(v)=1$ and to \texttt{False} if $f(v)=0$. Clearly, this assignment is well-defined. 
In order to see that this assignment NAE-satisfies $\mathcal{C}$, we show (for a clause $C\in\C$, by $C$-variable vertices we mean the variable vertices whose corresponding variables belong to clause~$C$):

\smallskip\noindent
\textsc{Claim:} {\em  For every clause $C$, at least one $C$-variable vertex is colored by~$0$ and at least one $C$-variable is colored by~$1$\/}.\\ 
\textsc{Proof of the claim:} Suppose to the contrary that there is a clause~$C\in\C$ such that all $C$-variable vertices are colored by the same color. Then we will encounter a contradiction, a \emph{bad pair}, where two vertices form a bad pair if there is no conflict-free shortest path in~$G$ connecting them. Since~$f$ is a strong conflict-free coloring of~$G$, there is no bad pair in~$G$. 

Without loss of generality, let all $C$-variable vertices be colored by~$0$. 
Then, the clause vertex $c$ is colored by~$1$ or~$2$. Moreover,
\begin{align}\label{fact1}
\text{$f(a_1)=1$ or $f(c)=1$, and $f(b_1)=1$ or $f(c)=1$.}
\end{align}
Otherwise, $a_1$ and $c$, or $b_1$ and $c$, would form a bad pair. (Note that every shortest $a_1,c$-path and every shortest $b_1,c$-path must go through $h_1$ and a $C$-variable vertex; all variable vertices, that are not $C$-variable vertices, are adjacent to $h_1$ but non-adjacent to~$c$.) 
For the same reason, for each~$i$, $2\le i\le 6$,  
\begin{align}\label{fact3}
\text{some vertices $x_i,y_i\in\{a_i,b_i,h_{i+1}\}$ have colors\,\,\quad}\\ \nonumber
\text{$f(x_i)\notin\{f(a_{i-1}),f(h_i)\}$, $f(y_i)\not\in\{f(b_{i-1}),f(h_i)\}$.}
\end{align}
Otherwise, $h_{i+1}$ and $a_{i-1}$, or $h_{i+1}$ and $b_{i-1}$, would form a bad pair. Notice that $x_i=y_i$ is possible, and that  
$f(\{a_{i-1},h_i,\allowbreak x_i\})\allowbreak = f\{b_{i-1},h_i,y_i\})\allowbreak =\{0,1,2\}$. It follows from (\ref{fact3}), that, for each~$i$, $2\le i\le 6$, 
\begin{align}\label{fact4}
\text{if $f(a_{i-1})\not=f(b_{i-1})$ and $f(h_i)=f(h_{i+1})$, then $f(a_i)\not=f(b_i)$.}
\end{align}
This is because, by (\ref{fact3}), if $f(a_{i-1})\not=f(b_{i-1})$ and $f(h_i)=f(h_{i+1})$ then $f(x_i)\not= f(y_i)$ and $\{x_i,y_i\}=\{a_i,b_i\}$. 

We now show that, for each $2\le i\le 6$, 
\begin{align}\label{fact5}
\text{$f(h_i)=2$ and $f(a_{i-1})\not=f(b_{i-1})$.}
\end{align}
Proof of (\ref{fact5}): we first verify the cases $i=2,3$. 
If $f(h_2)\not=2$, then $f(a_1)=f(b_1)$. Hence $f(c)=2$ (otherwise, by (\ref{fact3}), $x_2$ and~$c$ would form a bad pair because $f(\{x_2,h_2,a_1\})\allowbreak =f(\{x_2,h_2,b_1\})\allowbreak = f(\{h_1,v,c\})\allowbreak =\{0,1,2\}$ for any $C$-clause vertex $v$), and by (\ref{fact1}), $f(a_1)=f(b_1)=1$. 
Hence $f(h_2)=0$. Then no vertex $z\in\{a_2,b_2,h_3,a_3\}$ is colored by~$1$ (otherwise, $z$ and~$c$ would form a bad pair). But then $a_3$ and $h_2$ is a bad pair. Thus, $f(h_2)=2$, as claimed.

Now, if $f(a_1)=f(b_1)$ then $f(a_1)=f(b_1)=1$ (otherwise, $h_2$ and any $C$-variable vertex would form a bad pair), and $f(a_2)=f(b_2)=0$ (otherwise, $a_2$ and $h_1$, or $b_2$ and $h_1$, would form a bad pair). But then, if some $z\in\{a_3,h_3\}$ is colored by~$1$ then $z$ and any $C$-variable vertex form a bad pair, and if none of~$a_3$ and~$h_3$ is colored by~$1$ then $a_3$ and $h_2$ is a bad pair. Thus, $f(a_1)\not=f(b_1)$, as claimed. 
Then (\ref{fact1}) implies that
\begin{align*}
f(c)=1.
\end{align*}
Next, we verify the case $i=3$. If $f(h_3)\not=2$ then $f(a_2)=f(b_2)$ and $\{f(h_3),f(a_2)\}\allowbreak=\{f(h_3),f(b_2)\}=\{0,1\}$. But then $h_3$ and $c$ is a bad pair. Thus, $f(h_3)=2$, and by (\ref{fact4}), $f(a_2)\not=f(b_2)$. 

Finally, let $i\le 6$ be the largest index such that (\ref{fact5}) holds for all indices $j$ up to $i$, that is, $f(h_j)=2$ and $f(a_{j-1})\not=f(b_{j-1})$ for all $2\le j\le i$. We have seen that $i\ge 3$. Let $i<6$ (otherwise we are done), we show that~(\ref{fact5}) holds for $i+1$: if $f(h_{i+1})\not=2$ then $f(a_{i})=f(b_{i})$, and $\{f(h_{i+1}),f(a_i)\}=\{f(h_{i+1}),f(b_i)\}=\{0,1\}$. Now, by (\ref{fact3}), $x_{i+1}\in\{h_{i+2},a_{i+1},b_{i+1}\}$ has $f(x_{i+1})\in\{0,1\}$. Then $x_{i+1}$ and $a_{i-2}$, or $x_{i+1}$ and $b_{i-2}$ is a bad pair; note that, in~$G$, any shortest $x_{i+1},a_{i-2}$-path and any shortest $x_{i+1},b_{i-2}$-path must go through $h_{i+1},h_i$ and $h_{i-1}$. Thus, $f(h_{i+1})=2$, and by (\ref{fact4}), $f(a_i)\not=f(b_i)$. We have seen that (\ref{fact5}) holds for $i+1$, hence it holds for all $2\le i\le 6$. 

We now distinguish three cases: 
Let $f(h_7)=0$. Then $f(a_7)=f(b_7)=2$ and $f(a_6)=f(b_6)=1$. But then $c$ and $a_5$, or $c$ and $b_5$, is a bad pair. 

Let $f(h_7)=1$. Then $f(a_6)=f(b_6)=0$, and at least one of $a_7$ and $b_7$ is colored by~$2$ (otherwise, $c$ and $a_6$ would form a bad pair). 
Therefore, no clause vertex is colored by~$2$ because all clause vertices are adjacent to $a_7$ and $b_7$. Hence all edges joining a variable vertex and a clause vertex have an end colored by~$0$ and the other end colored by~$1$. But then $h_1$ and $a_6$ is a bad pair.

Finally, let $f(h_7)=2$. Then $f(a_7)=f(b_7)=0$. Moreover, by (\ref{fact4}), $f(a_6)\not=f(b_6)$. Hence no clause vertex is colored by~$2$: if $c'$ is a clause vertex with $f(c')=2$ then $c'$ and $a_6$, or $c'$ and $b_6$, is a bad pair. 
But then $h_1$ and $a_6$, or $h_1$ and $b_6$, is a bad pair. This final contradiction completes the proof of the Claim.

\smallskip\noindent
Now, for every clause $C$, by the claim, some variable in $C$ is assigned to \texttt{False} and some variable in~$C$ is assigned to \texttt{True}. That is, the assignment NAE-satisfies $\C$.

We have seen that the reduction is correct. Hence we obtain:

\begin{theorem}\label{thm:npc}
\SCFVCthree\ is \NP-complete, even when restricted to bipartite graphs of diameter~$8$ and radius~$8$.  
\end{theorem}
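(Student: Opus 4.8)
The plan is to package the reduction just presented into a proof of \NP-completeness. For membership in \NP, I would use a strong cfvc $3$-coloring itself as a polynomial-size certificate: given a graph $H$ and a function $f\colon V(H)\to\{1,2,3\}$, compute all pairwise distances by a breadth-first search from each vertex, and then, for each ordered pair $(u,v)$, decide whether some shortest $u,v$-path is conflict-free by a layered dynamic program along the BFS levels from $u$ --- on such a path it suffices to track, for each of the three colors, whether it has so far appeared zero times, exactly once, or at least twice, which is a constant amount of state. As already noted above (cf.~\cite{HsiehLLP24}), this check runs in time $\Oh(n^5)$, so \SCFVCthree\ lies in \NP.

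For \NP-hardness, I would invoke that \PNAESAT\ (equivalently \textsc{set splitting}, i.e., \textsc{hypergraph $2$-coloring}) is \NP-complete~\cite{Schaefer78} and use the map $(U,\C)\mapsto G$ constructed above. We have already observed that $G$ has $|U|+|\C|+21$ vertices and is constructible in polynomial time, and the equivalence ``$(U,\C)$ admits an NAE-assignment iff $G$ is strongly cfvc $3$-colorable'' has been verified in both directions: the ``if'' direction by exhibiting the explicit $3$-coloring of $G$, and the ``only if'' direction through the Claim, whose case analysis forces any strong cfvc $3$-coloring of $G$ to assign, in every clause, color $0$ to some variable vertex and color $1$ to some variable vertex. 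Hence this is a valid polynomial-time many-one reduction and \SCFVCthree\ is \NP-hard.

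It remains to record that the reduction stays inside the claimed class, which was already noted before the theorem: every $G$ produced is bipartite --- take $\{h_1,\dots,h_7\}$ together with all clause vertices as one side and $\{a_1,\dots,a_7\}\cup\{b_1,\dots,b_7\}$ together with all variable vertices as the other, and check that each of the edge types $h_ia_i$, $a_ih_{i+1}$, $h_{i+1}b_i$, $b_ih_i$, $h_7a_7$, $h_7b_7$, the clause--variable edges, the $a_7$-- and $b_7$--clause edges, and the $h_1$--variable edges crosses this partition --- and has diameter $8$ and radius $8$, e.g.\ $\dist(a_1,a_5)=8$ realizes the diameter while every vertex turns out to have eccentricity exactly $8$. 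Combining the three points gives the theorem. There is no real obstacle here: the substance is the correctness of the reduction, already carried out in the long case analysis above, and the only points needing a moment of care are making the \NP-membership check explicit and pinning down the exact diameter and radius of $G$ rather than mere upper bounds.
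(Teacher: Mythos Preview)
Your proposal is correct and follows essentially the same approach as the paper, which likewise packages the preceding reduction (construction of $G$, bipartiteness and diameter/radius observations, and the two-direction correctness argument culminating in the Claim) into the theorem, with \NP-membership handled by the polynomial-time verifier from~\cite{HsiehLLP24}. One small slip: you have the ``if'' and ``only if'' labels swapped --- exhibiting the explicit $3$-coloring from an NAE-assignment is the ``only if'' direction of the equivalence as you stated it, and the Claim handles the ``if'' direction --- but this is purely a labeling issue and the substance is right.
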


We now derive conditional lower bounds for exact exponential algorithms solving \SCFVCthree\ and for the kernel size of \SCFVCthree\ parameterized by the vertex cover number.

\subsection*{Exact exponential algorithms}
Recall first that there is a polynomial-time algorithm verifying whether a given proper $k$-coloring is strong conflict-free (\cite{HsiehLLP24}). Thus, by considering the at most $k^n$ proper $k$-colorings of an $n$-vertex graph, it follows that \SCFVCk\ can be solved in time $\Oh^*(k^{n})$. (The $\Oh^*$ notation ignores polynomial factors.)

Now, observe that our reduction from \PNAESAT\ to \SCFVCthree\ also works from \PNAEthreeSAT. Thus, we can derive a worst case lower bound on the running time for exact exponential algorithms solving \SCFVCk\ based on the well-known Exponential Time Hypothesis (ETH) (\cite{ImpagliazzoP01,ImpagliazzoPZ01}). The ETH, if true, implies that \threeSAT\ cannot be solved in time $2^{o(N+M)}$ for input CNF formulas over $N$ variables and $M$ clauses. 
This in turn implies that, assuming ETH, \PNAEthreeSAT\ cannot be solved in time $2^{o(N+M)}$, too. (It is well-known that there is a polynomial reduction from \threeSAT\ to \PNAEthreeSAT\ linear in $N$ and $M$.) 
Since the constructed graph~$G$ has $n=N+M +21$ vertices, where~$N$ and~$M$ is the number of variables and clauses, respectively, we obtain:

\begin{theorem}\label{thm:eth}
There is an algorithm solving \SCFVCk\ in time $\Oh^*(k^n)$ for $n$-vertex graphs. 
Moreover, assuming ETH, no algorithm can solve \SCFVCthree\ in time $2^{o(n)}$, even when restricted to $n$-vertex bipartite graphs of diameter~$8$ and radius~$8$.  
\end{theorem}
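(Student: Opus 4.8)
The plan is to prove the two assertions separately, both of them resting on ingredients already available: the polynomial-time verifier for strong conflict-free colorings of~\cite{HsiehLLP24}, and the correctness of the reduction constructed earlier in Section~\ref{sec:nokernel} (Theorem~\ref{thm:npc}).

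For the $\Oh^*(k^n)$ algorithm, I would simply brute-force over all colorings. An $n$-vertex graph admits at most $k^n$ functions $f:V(G)\to[k]$; each candidate $f$ can be tested in polynomial time by first checking that it is a proper coloring and then invoking the polynomial-time algorithm of~\cite{HsiehLLP24} that decides whether every pair of vertices is joined by a conflict-free shortest path. The algorithm accepts iff some candidate passes both tests. This is a routine exhaustive-search bound, and I expect no difficulty with it.

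For the conditional lower bound the plan has three steps. First, observe that the reduction of Section~\ref{sec:nokernel} never exploits large clauses: the only hypothesis used is that each clause has at least two variables, so the very same construction is a valid polynomial reduction from \PNAEthreeSAT\ to \SCFVCthree. Second, import the standard consequences of ETH: \threeSAT\ on formulas with $N$ variables and $M$ clauses has no $2^{o(N+M)}$-time algorithm, and the textbook reduction from \threeSAT\ to \PNAEthreeSAT\ is linear in both $N$ and $M$, so \PNAEthreeSAT\ inherits the same $2^{o(N+M)}$ lower bound under ETH. Third, chain this with the size bound on our output graph: since $G$ has exactly $n=N+M+21$ vertices, a hypothetical $2^{o(n)}$-time algorithm for \SCFVCthree\ applied to $G$ would decide the \PNAEthreeSAT\ instance in time $2^{o(N+M+21)}=2^{o(N+M)}$, contradicting ETH. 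As $G$ is bipartite with diameter and radius~$8$ (already verified in the construction), the lower bound holds even on this restricted graph class.

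The only point requiring care — hardly an obstacle — is verifying that the reduction really does remain correct when clauses are capped at three variables, and that composing the \threeSAT-to-\PNAEthreeSAT\ reduction with ours keeps the vertex count linear in $N+M$; both are immediate from inspecting the construction, which adds just $21$ fixed vertices plus one vertex per variable and per clause. No additional combinatorial argument beyond Theorem~\ref{thm:npc} is needed.
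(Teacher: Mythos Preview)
Your proposal is correct and follows essentially the same route as the paper: brute-force the $k^n$ colorings with the verifier of~\cite{HsiehLLP24} for the upper bound, then observe the Section~\ref{sec:nokernel} reduction applies unchanged to \PNAEthreeSAT, combine with the linear \threeSAT-to-\PNAEthreeSAT\ reduction and the vertex count $n=N+M+21$ to rule out $2^{o(n)}$ algorithms under ETH. The paper's own argument is identical in structure and in all technical details.
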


\subsection*{Kernel lower bound}
We now show that \SCFVCthree\ parameterized by the vertex cover number,  \SCFVCthree$[\textsc{vc}]$, has no polynomial kernel, unless \containment. 

\medskip\noindent
\fbox{
\begin{minipage}{.96\textwidth}
\SCFVCthree$[\textsc{vc}]$\\[.7ex]
\begin{tabular}{l l}
{\em Instance:}& A connected graph $G$ and a vertex cover $X\subset V(G)$ of $G$.\\
{\em Parameter:}& $|X|$.\\
{\em Question:}& $\svcfc(G)\le 3$, i.e., is $G$ strongly cfvc $3$-colorable\,?
\end{tabular}
\end{minipage}
}

\medskip\noindent
It can be observed that our reduction is a polynomial parameter transformation from \PNAESAT$[n]$, \emph{with parameter} $n$, the number of variables, to \SCFVCthree$[\textsc{vc}]$: let $X = U\cup \{h_1,h_2,\ldots,h_7\}\cup\{a_7,b_7\}$. Then~$X$ is a vertex cover of $G$, and $|X|=n+9$ is polynomial in the input parameter. 

It is known (\cite{KratschL24}) that \PNAESAT$[n]$ (aka \textsc{set splitting}$[n]$) is complete for the class called $\mktwo$ introduced in~\cite{HermelinKSWW15} under polynomial parameter transformations. Thus, \SCFVCthree$[\textsc{vc}]$ is $\mktwo$-hard, and therefore has no polynomial kernel unless $\containment$. (As proved in~\cite{HermelinKSWW15}, $\mktwo$-hard problems do not admit polynomial kernel under the assumption $\NP\nsubseteq\coNpo$.) 
Hence we obtain:

\begin{theorem}\label{thm:nopoly-vc-kernel}
Assuming $\NP\nsubseteq\coNpo$, \SCFVCthree$[\textup{\textsc{vc}}]$ 
 admits no polynomial kernel.
\end{theorem}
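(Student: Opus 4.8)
The plan is to recognise the reduction built in this section as a polynomial parameter transformation from \PNAESAT$[n]$, parameterized by the number $n$ of variables, to \SCFVCthree$[\textsc{vc}]$, and then to combine this with two facts from the literature: that \PNAESAT$[n]$ is $\mktwo$-complete, and that no $\mktwo$-hard parameterized problem admits a polynomial kernel unless \containment.

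Concretely, I would start from an instance $(U,\C)$ of \PNAESAT\ with $|U|=n$, in which every clause has size at least two, and let $G$ be the bipartite graph constructed above. This takes polynomial time, and by Theorem~\ref{thm:npc}, $(U,\C)$ is a yes-instance of \PNAESAT\ if and only if $G$ is a yes-instance of \SCFVCthree. To obtain an instance of the parameterized problem \SCFVCthree$[\textsc{vc}]$, I would output, together with $G$, the set $X=U\cup\{h_1,\ldots,h_7\}\cup\{a_7,b_7\}$ and check that it is a vertex cover of $G$: the complement $V(G)\setminus X$ is exactly the set of clause vertices together with the inner vertices $a_1,\ldots,a_6,b_1,\ldots,b_6$, and inspecting the edge set of $G$ shows this set is independent --- each clause vertex is adjacent only to variable vertices and to $a_7,b_7$, all of which lie in $X$, while each $a_i$ and $b_i$ with $1\le i\le 6$ is adjacent only to vertices among $h_1,\ldots,h_7$, again in $X$. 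Since $|X|=n+9$ is (even linearly) bounded in the source parameter $n$ and the whole construction runs in polynomial time, the map $(U,\C)\mapsto(G,X)$ is a polynomial parameter transformation.

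To conclude, I would invoke \cite{KratschL24}, by which \PNAESAT$[n]$ (equivalently \textsc{set splitting}$[n]$) is complete, under polynomial parameter transformations, for the class $\mktwo$ of \cite{HermelinKSWW15}; since polynomial parameter transformations compose, the transformation above makes \SCFVCthree$[\textsc{vc}]$ $\mktwo$-hard, and \cite{HermelinKSWW15} shows that $\mktwo$-hard problems admit no polynomial kernel unless \containment. I do not expect a real obstacle: the only genuine verification is that the declared $X$ covers every edge of $G$, and the graph was built precisely so that the ``independent side'' consists of the clause vertices and the inner $a_i,b_i$, whose neighbourhoods lie wholly in $X$; the equivalence of instances is already supplied by Theorem~\ref{thm:npc}, and the rest is standard bookkeeping about polynomial parameter transformations and $\mktwo$.
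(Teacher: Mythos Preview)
Your proposal is correct and matches the paper's own argument essentially verbatim: the paper exhibits exactly the same vertex cover $X=U\cup\{h_1,\ldots,h_7\}\cup\{a_7,b_7\}$ of size $n+9$, recognises the reduction as a polynomial parameter transformation from \PNAESAT$[n]$ to \SCFVCthree$[\textsc{vc}]$, and then invokes the $\mktwo$-completeness of \PNAESAT$[n]$ from~\cite{KratschL24} together with the kernel lower bound for $\mktwo$-hard problems from~\cite{HermelinKSWW15}.
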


\smallskip
Finally, we point out that our polynomial-parameter transformation from \PNAESAT$[n]$ to \SCFVCthree$[\textsc{vc}]$ is easily modifiable to a polynomial-parameter transformation from \PNAESAT$[n]$ to \SCFVCthree$[\textsc{dp}]$, where \textsc{dp} is the \emph{distance to a path}. 

\medskip\noindent
\fbox{
\begin{minipage}{.96\textwidth}
\SCFVCthree$[\textsc{dp}]$\\[.7ex]
\begin{tabular}{l l}
{\em Instance:}& A connected graph $G$ and $X\subset V(G)$ such that $G-X$\\
                      & is a path.\\
{\em Parameter:}& $|X|$.\\
{\em Question:}& $\svcfc(G)\le 3$, i.e., is $G$ strongly cfvc $3$-colorable\,?
\end{tabular}
\end{minipage}
}

\medskip
\begin{theorem}\label{thm:no-poly-dp-kernel}
Assuming $\mathsf{NP\nsubseteq coNP/poly}$, \SCFVCthree$[\textup{\textsc{dp}}]$ 
 admits no polynomial kernel.
\end{theorem}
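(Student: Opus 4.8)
The plan is to give a polynomial-parameter transformation from \PNAESAT$[n]$ to \SCFVCthree$[\textsc{dp}]$ by reshaping the graph $G$ of Section~\ref{sec:nokernel}; the kernel lower bound then follows exactly as in the proof of Theorem~\ref{thm:nopoly-vc-kernel}, since \PNAESAT$[n]$ is \mktwo-complete under polynomial-parameter transformations. Recall that in the \SCFVCthree$[\textsc{vc}]$ transformation the modulator was $X=U\cup\{h_1,\ldots,h_7\}\cup\{a_7,b_7\}$ and $V(G)\setminus X$ consisted of the twelve cycle vertices $a_1,\ldots,a_6,b_1,\ldots,b_6$ together with the clause vertices. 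First I would move the twelve cycle vertices into the modulator, setting $X'=X\cup\{a_1,\ldots,a_6,b_1,\ldots,b_6\}$, so that $G-X'$ is an independent set on the clause vertices. To turn this independent set into a \emph{path}, fix an arbitrary order $C_1,\ldots,C_m$ of the clauses and, for each $1\le i\le m-1$, add a new vertex $q_i$ and the two edges $c_iq_i$ and $q_ic_{i+1}$; let $G'$ be the resulting graph, with the $q_i$ outside $X'$. Then $G'-X'$ is precisely the path $c_1q_1c_2q_2\cdots q_{m-1}c_m$, so $X'$ is a modulator of $G'$ to a path with $|X'|=n+21$, polynomial in the input parameter; moreover $G'$ is still bipartite (the clause vertices lie on one side, so each $q_i$ goes on the other) and still has bounded diameter. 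The construction is clearly polynomial, so only correctness remains.

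The first thing I would prove is a structural fact underpinning everything: $\dist_{G'}(x,y)=\dist_G(x,y)$ for all $x,y\in V(G)$, and no $q_i$ lies on a shortest $G'$-path between two vertices of $V(G)$ \emph{except} possibly between the consecutive clause vertices $c_i$ and $c_{i+1}$. Indeed, a path through $q_i$ must run $\cdots c_i\,q_i\,c_{i+1}\cdots$, and replacing the middle vertex $q_i$ by $a_7$ --- which is adjacent to both $c_i$ and $c_{i+1}$ and lies in $V(G)$ --- gives a walk of the same length between the same endpoints; hence distances among old vertices are unchanged, and $q_i$ can shorten nothing because $\dist_G(c_i,c_{i+1})=2$ already. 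Given this, the backward direction ($G'$ strongly cfvc $3$-colorable $\Rightarrow$ $(U,\C)$ NAE-satisfiable) is obtained by \emph{replaying verbatim} the proof of Theorem~\ref{thm:npc}: that argument only ever inspects shortest paths between pairs of old vertices, and never between two clause vertices, so by the structural fact the very same shortest paths and the very same ``bad pair'' contradictions arise in $G'$.

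For the forward direction, from a satisfying NAE-assignment I would take the coloring of Theorem~\ref{thm:npc} on $V(G)$ and color every $q_i$ by $0$; this is proper since both neighbors $c_i,c_{i+1}$ of $q_i$ are colored $2$. Every pair inside $V(G)$ is handled by the original coloring, because (by the structural fact) a conflict-free shortest path of $G$ is still a shortest path of $G'$. For a pair $(q_i,z)$ with $\dist_{G'}(q_i,z)\ge 3$, a shortest path necessarily consists of a shortest $c,z$-path for $c\in\{c_i,c_{i+1}\}$ achieving the smaller distance, followed by the edge to $q_i$; so it suffices to find, from that clause vertex $c$, a shortest $c,z$-path to which prepending the color-$0$ vertex $q_i$ preserves conflict-freeness --- e.g.\ one on which some color in $\{1,2\}$ occurs exactly once, or one avoiding color $0$. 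Since each clause is NAE-satisfied, $c$ has both a color-$0$ and a color-$1$ neighbor among its variables (used when routing toward $h_1$), and on the chain side $a_7$ (color $0$) and $b_7$ (color $1$) are both available (used when routing toward the $h_j$'s and the $a_j,b_j$'s); a short case analysis on the type of $z$ (variable vertex, clause vertex, some $q_j$, $h_j$, $a_j$, or $b_j$) produces the required conflict-free shortest path in each case.

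I expect the backward direction to be the conceptual crux: because strong cfvc colorability is \emph{not} monotone under taking induced subgraphs, one may not simply restrict a coloring of $G'$ to the induced subgraph $G$, and the argument has to be carried out inside $G'$; what makes this legitimate is precisely the structural fact that subdividing the ``clause path'' creates no new shortest paths between the pairs of old vertices scrutinized in the proof of Theorem~\ref{thm:npc}. The forward direction is routine but slightly tedious, the one delicate point being the choice between a \texttt{True}- and a \texttt{False}-labeled variable of a clause when routing a $q_i$ toward $h_1$ so that the extended path stays conflict-free.
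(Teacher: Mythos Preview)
Your proposal is correct and is essentially the paper's own proof: the construction (subdividing the clause vertices into a path by inserting one new vertex between each consecutive pair, taking the modulator $U\cup\{a_i,b_i,h_i\mid 1\le i\le 7\}$ of size $n+21$, and coloring the new vertices by~$0$) is identical up to notation, and both the forward and backward directions are argued the same way. If anything, your write-up is more careful than the paper's sketch, in that you make explicit the structural fact that the new vertices create no new shortest paths between the old vertex pairs used in the ``bad pair'' analysis.
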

\noindent
\textbf{Proof Sketch.}
The reduction from \PNAESAT$[n]$ to \SCFVCthree$[\textup{\textsc{dp}}]$ is almost the same as the reduction from \PNAESAT$[n]$ to \SCFVCthree$[\textsc{vc}]$ at the beginning of this section. 

Given a CNF formula $(U,\C)$ with $m=|\C|$ clauses and with no negative literals, the small modification in constructing the instance $G$ for \SCFVCthree\ is as follows: take $m-1$ additional vertices $c_{12}$, $c_{23}$, \ldots, $c_{m-1,m}$, and make a path from $c_1$ to $c_m$ through the vertices~$c_{12}, c_{23},\ldots,c_{m-1,m}$. See Fig.~\ref{fig:dp}.  

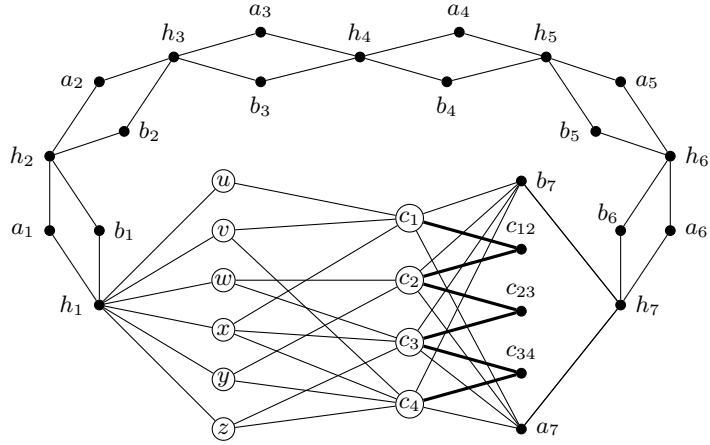
\begin{figure}[ht]
\centering
\begin{tikzpicture}[scale=.33]
\node[vertexV] (u) at (8,11) {\small $u$};
\node[vertexV] (v) at (8,9) {\small $v$};
\node[vertexV] (w) at (8,7) {\small $w$};
\node[vertexV] (x) at (8,5) {\small $x$};
\node[vertexV] (y) at (8,3) {\small $y$};
\node[vertexV] (z) at (8,1) {\small $z$};
\node[vertexC] (c1) at (15.5,9.5) {\small $c_1$};
\node[vertexC] (c2) at (15.5,7) {\small $c_2$};
\node[vertexC] (c3) at (15.5,4.5) {\small $c_3$};
\node[vertexC] (c4) at (15.5,2) {\small $c_4$};

\draw (c1)--(u); \draw (c1)--(v); \draw (c1)--(x);
\draw (c2)--(w); \draw (c2)--(y);
\draw (c3)--(w); \draw (c3)--(x); \draw (c3)--(z);
\draw (c4)--(v); \draw (c4)--(x); \draw (c4)--(y); \draw (c4)--(z);

\node[vertex] (h1) at (3,6) [label=left:{\small $h_1$}]{};
\node[vertex] (h2) at (1,12) [label=left:{\small $h_2$}]{};
\node[vertex] (h3) at (6,16) [label=above:{\small $h_3$}]{};
\node[vertex] (h4) at (13.5,16)  [label=above:{\small $h_4$}]{};
\node[vertex] (h5) at (21,16)  [label=above:{\small $h_5$}]{};
\node[vertex] (h6) at (26,12)  [label=right:{\small $h_6$}]{};
\node[vertex] (h7) at (24,6)  [label=right:{\small $h_7$}]{};
\node[vertex] (a1) at (1,9) [label=left:{\small $a_1$}]{};
\node[vertex] (a2) at (3,15) [label=left:{\small $a_2$}]{};
\node[vertex] (a3) at (9.5,17) [label=above:{\small $a_3$}]{};
\node[vertex] (a4) at (17.5,17) [label=above:{\small $a_4$}]{};
\node[vertex] (a5) at (24,15) [label=right:{\small $a_5$}]{};
\node[vertex] (a6) at (26,9) [label=right:{\small $a_6$}]{};
\node[vertex] (a7) at (20,1) [label=right:{\small $a_{7}$}]{};

\node[vertex] (b1) at (3,9) [label=right:{\small $b_1$}]{};
\node[vertex] (b2) at (4,13) [label=right:{\small $b_2$}]{};
\node[vertex] (b3) at (9.5,15) [label=below:{\small $b_3$}]{};
\node[vertex] (b4) at (17,15) [label=below:{\small $b_4$}]{};
\node[vertex] (b5) at (23,13) [label=left:{\small $b_5$}]{};
\node[vertex] (b6) at (24,9) {}; 
\node at (23.5,9.7) {\small $b_6$};
\node[vertex] (b7) at (20,11) [label=right:{\small $b_7$}]{};

\node[vertex] (c12) at (20,8.25) [label=above:{\small $c_{12}$}]{};
\node[vertex] (c23) at (20,5.75)  [label=above:{\small $c_{23}$}]{};
\node[vertex] (c34) at (20,3.25)  [label=above:{\small $c_{34}$}]{};

\draw (h1)--(u); \draw (h1)--(v); \draw (h1)--(w); \draw (h1)--(x); \draw (h1)--(y); \draw (h1)--(z);

\draw (h7)--(a7); \draw (h7)--(b7); 

\draw (a7)--(c1); \draw (a7)--(c2); \draw (a7)--(c3); \draw (a7)--(c4); 
\draw (b7)--(c1); \draw (b7)--(c2); \draw (b7)--(c3); \draw (b7)--(c4); 

\draw[very thick] (c1)--(c12)--(c2)--(c23)--(c3)--(c34)--(c4);

\draw (h1)--(a1)--(h2)--(a2)--(h3)--(a3)--(h4)--(a4)--(h5)--(a5)--(h6)--(a6)--(h7)--(a7);
\draw (h1)--(b1)--(h2)--(b2)--(h3)--(b3)--(h4)--(b4)--(h5)--(b5)--(h6)--(b6)--(h7)--(b7); 

\end{tikzpicture} 

\caption{Polynomial-parameter transformation from \PNAESAT$[n]$ to \SCFVCthree$[\textsc{dp}]$: the bipartite graph $G$ from the formula $(U,\C)$ over variable set $U=\{u,v,w,x,y,z\}$ and clause set $\C=\{C_1,C_2,C_3,C_4\}$ with $C_1=\{u, v, x\}$, $C_2=\{w,y\}$, $C_3=\{w,x,z\}$ and $C_4= \{v,x,y,z\}$.}\label{fig:dp}
\end{figure}

Then, the modulator is $X = U\cup\{a_i,b_i,h_i\mid 1\le i\le 7\}$: the remainder graph $G-X$ is the path $c_1, c_{12}, c_2,\ldots,c_{m-1,m}, c_m$, and $|X|=n+21$ is polynomial in the input parameter $n=|U|$. 

Note that $G$ remains bipartite and has diameter and radius~8. Moreover, 
$(U,\C)$ is yes for \PNAESAT\ if and only if $G$ is yes for \SCFVCthree: 
For one direction, assume that there is a satisfying NAE-assignment \texttt{b} for $\mathcal{C}$. Then color the vertices of $G$ as follows:
\begin{itemize}
\item color variable vertex $v$ by color~$1$ if $\texttt{b}(v)=\texttt{True}$. Otherwise by color~$0$;
\item color all clause vertices $c$ by color $2$;
\item color all $h_i$ by color $2$;
\item color all $a_i$ by color $0$ and all $b_i$ by color $1$;
\item color all $c_{ij}$ by color~$0$.
\end{itemize}
It is not hard to see that this is a strong conflict-free coloring of $G$. 

For the other direction, assuming $G$ admits a strong cfvc $3$-coloring, the same proof as in Theorem~\ref{thm:npc} can be applied to verify that $(U,\C)$ is yes for \PNAESAT. (The additional vertices $c_{ij}$ have no impact on the proof.) 
\qed


\section{Conclusion}\label{sec:con}
In this paper, we study the strong conflict-free vertex-connection coloring, an interesting variant of graph coloring recently introduced in~\cite{HsiehLLP24}. 
The main result states that, when parameterized by the vertex cover number, \SCFVC\ is parameter-tractable but, unless $\containment$, \SCFVCthree\ does not admit a polynomial kernel, even for bipartite graphs of bounded diameter. As a reviewer commented, this is a remarkable result since the somewhat tougher problem, \SCFVC,  is fpt, and the somewhat easier problem, \SCFVCthree, still has no polynomial kernel.

There are major gaps in our understanding of strong conflict-free vertex-connection colorings. First of all, the complexity of \SCFVC\ restricted to trees is still unsettled~\cite{Li25,LiZZMZJ20,LiW18+,Wu25}.

\begin{problem} Is there a polynomial-time algorithm computing the (strong) conflict-free vertex-connection number of a given tree?
\end{problem}

Next, while we believe that, when restricted to bipartite graphs, \SCFVCk\ remains \NP-complete for any fixed $k\ge4$, we are not able to prove this fact.

\begin{problem} Is \SCFVCk\ \NP-complete for any fixed $k\ge4$ when restricted to bipartite graphs?
\end{problem}

Indeed, we believe that Theorems~\ref{thm:eth} and~\ref{thm:nopoly-vc-kernel} accordingly hold true for \SCFVCk\ for any fixed $k\ge3$.

To understand the role of twins in strong cfvc colorings we may ask how far we can go beyond the vertex cover number when improving Theorem~\ref{thm:vc-fpt}. The very first step would to consider the twin cover number.

\begin{problem} Does there exist an fpt algorithm for \SCFVC\ parameterized by the twin cover number? 
\end{problem}

Recall that a \emph{twin cover} (cf.~\cite{Ganian11,Ganian15}) of a graph~$G$ is a set $X\subseteq V(G)$ such that $N[u]=N[v]$ holds for every edge $\{u,v\}$ of $G-X$, i.e., every two adjacent vertices of $G-X$ are twins (which means that they must be true twins). The \emph{twin cover number} is the smallest size of a twin cover. Note that any vertex cover is a twin cover, hence the twin cover number is at most the vertex cover number.

Finally, although we have shown that, under $\mathsf{NP\nsubseteq coNP/poly}$, \SCFVCthree$[\textup{\textsc{dp}}]$ does not admit a polynomial kernel, we do not know whether the problem has a kernel at all. 

\begin{problem} Does there exist an fpt algorithm for \SCFVCthree$[\textup{\textsc{dp}}]$? More general: does there exist an fpt algorithm for \SCFVC\ parameterized by the distance to a path? 
\end{problem}

\bibliography{DA18126grow24}

\end{document}